\newtheorem{proof}{Proof}
\newtheorem{theorem}{Theorem}
\newcommand{\beginsupplement}{%
	\setcounter{table}{0}
	\renewcommand{\thetable}{S\arabic{table}}%
	\setcounter{figure}{0}
	\renewcommand{\thefigure}{S\arabic{figure}}%
	\setcounter{equation}{0}
	\renewcommand{\theequation}{S\arabic{equation}}%
	\setcounter{section}{0}
	\renewcommand{\thesection}{\arabic{section}}%
        \setcounter{page}{1}
        \renewcommand{\thepage}{S\arabic{page}}
}
\begin{document}
\title{Simulating fluid vortex interactions on a superconducting quantum processor}
\affiliation{Department of Engineering Mechanics, School of Aeronautics and Astronautics, Zhejiang University, Hangzhou, 310027, China\\
$^{2}$School of Physics, ZJU-Hangzhou Global Scientific and Technological Innovation Center,\\ and Zhejiang Key Laboratory of Micro-nano Quantum Chips and Quantum Control, Zhejiang University, Hangzhou 310027, China\\
$^{3}$State Key Laboratory for Turbulence and Complex Systems, College of Engineering, Peking University, and HEDPS-CAPT, Peking University, Beijing, 100871, China}

\author{Ziteng Wang$^{1}$}\thanks{These authors contributed equally to this work.}
\author{Jiarun Zhong$^{2}$}\thanks{These authors contributed equally to this work.}
\author{Ke Wang$^{2}$}
\author{Zitian Zhu$^{2}$}
\author{Zehang Bao$^{2}$}
\author{Chenjia Zhu$^{1}$}
\author{Wenwen Zhao$^{1}$}
\author{Yaomin Zhao$^{3}$}
\author{Yue Yang$^{3}$}
\author{Chao Song$^{2}$}\email{chaosong@zju.edu.cn}
\author{Shiying Xiong$^{1}$}\email{shiying.xiong@zju.edu.cn}

\begin{abstract}
\textbf{Vortex interactions are commonly observed in atmospheric turbulence, plasma dynamics, and collective behaviors in biological systems.
However, accurately simulating these complex interactions is highly challenging due to the need to capture fine-scale details over extended timescales, which places computational burdens on traditional methods.
In this study, we introduce a quantum vortex method, reformulating the Navier--Stokes (NS) equations within a quantum mechanical framework to enable the simulation of multi-vortex interactions on a quantum computer. We construct the effective Hamiltonian for the vortex system and implement a spatiotemporal evolution circuit to simulate its dynamics over prolonged periods. By leveraging eight qubits on a superconducting quantum processor with gate fidelities of 99.97\% for single-qubit gates and 99.76\% for two-qubit gates, we successfully reproduce natural vortex interactions. This method bridges classical fluid dynamics and quantum computing, offering a novel computational platform for studying vortex dynamics. Our results demonstrate the potential of quantum computing to tackle longstanding challenges in fluid dynamics and broaden applications across both natural and engineering systems.}
\end{abstract}

\maketitle
\thispagestyle{empty}

\begin{figure*}
\centering
\includegraphics[width=15cm]{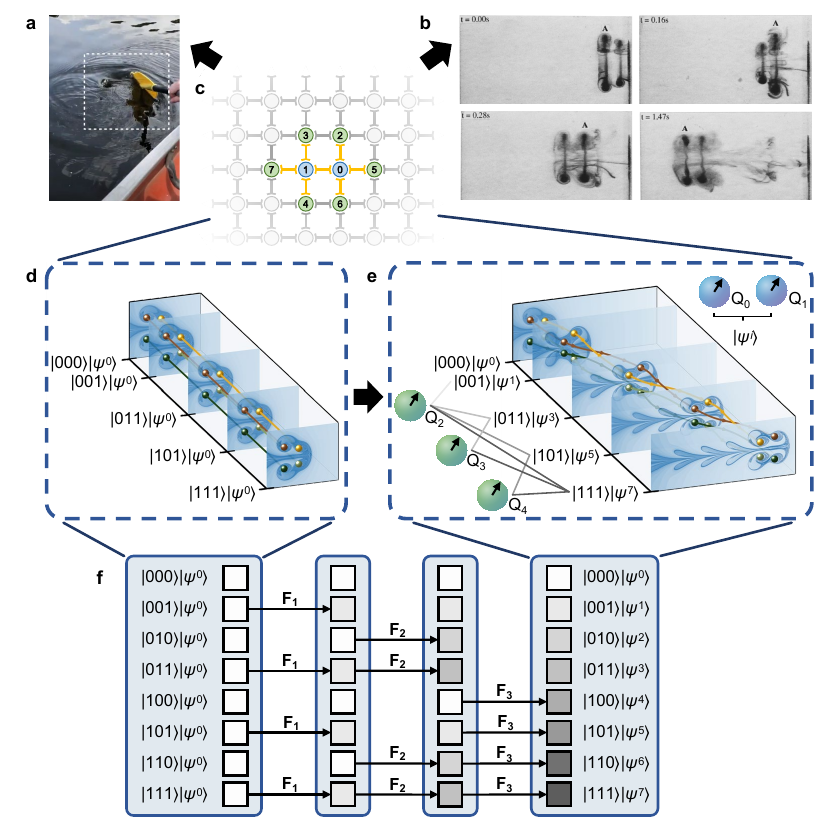}
\caption{\textbf{Overview for implementing vortex interactions
using a superconducting quantum chip.} 
\textbf{a,} Vortex pairs generated by paddling in natural fluid systems. 
\textbf{b,} Laboratory-induced vortex interactions, leading to a frog-leap configuration (Reprinted from Lim (1997)~\cite{lim1997note}, with the permission of AIP Publishing.). 
\textbf{c,} Schematic of our superconducting quantum chip, where all qubits are arranged in a square lattice with nearest-neighbor couplings.
Blue and green circles represent the spatial and temporal qubits used in our experiment.
\textbf{d,} 
The initial state of a four-vortex-particle system, which can be expressed as the tensor product of $|\Psi^0\rangle$ that encodes the initial spatial information of the vortex particles and a uniform superposition state that encodes the temporal information.
The quantum state is prepared with $n_p$ spatial qubits and $n_t$ temporal qubits.
Here, we take \(n_p = 2\) and \(n_t = 3\) for illustration.
\textbf{e,} System's final state, where each basis state of the temporal qubits is entangled with the corresponding spatial state, $|\Psi^i\rangle$, with $i$ indexes the time step.
\textbf{f,} The quantum parallel evolution enabled by our circuit. \(\bm{F}_k\) represents the unitary that evolves the system for $2^{k-1}$ time steps, with \(k\) ranging from $1$ to $n_t$. 
}
\label{fig:overview}
\end{figure*}
\vspace{.5cm}
\noindent\textbf{\large{}INTRODUCTION}

\noindent Vortices in fluids constitute a core component of complex flow behaviors, encompassing phenomena such as tropical cyclones~\cite{fiorino1989some,rios2024review,wang2004current}, ocean currents~\cite{mcwilliams1985submesoscale,zhang2024three,jebri2024absence}, microfluidics~\cite{han2022microfluidic,thurgood2023dynamic}, as well as plasmas and magnetofluids~\cite{tanaka2019vortex,izhovkina2016interaction,xiong2020effects,Aluie2009,belashov2017interaction}. Vortex interactions, which involve complex behaviors like vortex pairing and the leapfrogging effect (as shown in Fig.~\ref{fig:overview}a and Fig.~\ref{fig:overview}b), affect energy transport, momentum exchange, and the scale cascade process in fluids, ultimately determining turbulence characteristics and its evolution~\cite{shen2024designing,almoguera2024vortex,majda2002vorticity,morton1984generation}. However, simulating these critical and intricate structures using classical computation is highly challenging, as achieving the necessary spatial and temporal resolution to capture fine-scale details over extended timescales demands massive computational resources~\cite{ishihara2009study,ferziger2019computational,kochkov2021machine,blocken2015computational}, often exceeding practical limits.  This complexity has spurred the development of advanced methods to address the computational bottlenecks while maintaining accuracy.

Recent progress in quantum computing presents a promising avenue to address these challenges, as emerging research on universal quantum partial differential equation (PDE) / ordinary differential equation (ODE) solvers\cite{liu2021efficient, tennie2024solving, lloyd2020quantum, pfeffer2022hybrid, kyriienko2021solving, lubasch2020variational, harrow2009quantum, childs2017quantum} demonstrates potential for application in computational fluid dynamics (CFD) by leveraging quantum algorithms to replace key components of traditional solvers based on the fluid governing equations\cite{steijl2018parallel, gaitan2020finding, chen2022quantum, lapworth2022hybrid, demirdjian2022variational, succi2024ensemble, liu2023quantum, jaksch2023variational}. Additionally, alternative fluid dynamics descriptions optimized for quantum computing have been proposed, including quantum algorithms inspired by the lattice Boltzmann method~\cite{ljubomir2022quantum,itani2024quantum,sanavio2024lattice,budinski2021quantum,wang2025quantum}, quantum simulations based on Schr\"odingerization~\cite{jin2024quantum,jin2023quantum,jin2023analog}, and the hydrodynamic Schr\"odinger equation~\cite{meng2023quantum,meng2024simulating}, which is inherently more suitable for quantum computing than the conventional Navier--Stokes (NS) equations~\cite{succi2023quantum}.

Although quantum computing has demonstrated its potential in fluid mechanics, simulating fluid motion on actual quantum devices based on existing algorithms remains challenging. 
Most current implementations are confined to relatively simple scenarios, with complex phenomena such as vortex interactions still largely unexplored. 
This limitation is primarily due to the fact that the majority of current algorithms rely on Eulerian methods, which require high spatial resolution to accurately capture fluid behaviors, significantly increasing the quantum resources needed, as qubit requirements grow with grid resolution. 
Moreover, many quantum algorithms for simulating the time evolution of a system typically require a measurement at every time step to extract information necessary for studying dynamical behavior, computing physical quantities, and optimizing algorithms. 
However, since measurements collapse the quantum state, the quantum state at intermediate steps must be re-prepared to continue the computation. 

In this study, we reformulate the classical vortex method into a framework compatible with quantum computation and propose a novel spatiotemporal encoding scheme that embeds both spatial and temporal information directly into the initial quantum state. 
This approach enables the retrieval of information at multiple time steps from a single quantum execution, thereby eliminating the need for stepwise state preparation. 
Specifically, we propose the quantum vortex method (QVM), which directly focuses on vortices themselves instead of relying on spatial discretization grids as in the Eulerian methods~\cite{ferziger2002computational,versteeg2007introduction,zienkiewicz2005finite}, thereby enabling the reformulation of complex vortex interactions in fluids within the framework of quantum computing.
The QVM transforms the evolution of the vortex particle system into the evolution of a wavefunction. We adopt a data-driven strategy to train an evolution module that captures the dynamics of the wavefunction. 
Leveraging this trained module, we then design an efficient spatiotemporal evolution circuit to implement the wavefunction propagation, where spatial qubits encode the spatial information of the vortex particle system, while auxiliary temporal qubits, initialized into a superposition state via Hadamard gates, act as placeholders for all time steps and later serve as control qubits to guide the evolution module in the spatial circuit. 
This design effectively eliminates the necessity of performing measurements at every time step for state retrieval and avoids the repeated state preparations that are otherwise required due to measurement-induced collapse.
Building upon these theoretical developments, we implement the QVM on superconducting quantum processors to efficiently compute vortex interaction dynamics.
This approach bridges classical fluid dynamics and quantum simulations, providing a new platform for exploring both quantum and classical vortex phenomena and offering a powerful tool for reinterpreting classical vortex dynamics from a quantum perspective. 

\vspace{.5cm}
\noindent\textbf{\large{}RESULTS}

\noindent\textbf{\large{}Quantum vortex method}

\noindent The fluid dynamics are governed by the NS equations for the velocity field \(\bm{u}(\bm{x}, t)\), which describe the evolution of the flow under the influence of pressure \(p\), viscosity \(\nu\), and external forces \(\bm{f}\):
\begin{equation}
\begin{dcases}
\frac{\operatorname{D} \bm{u}}{\operatorname{D} t} = -\frac{1}{\rho} \bm \nabla p + \nu \nabla^2 \bm{u} + \bm{f}, \\
\bm{\nabla} \cdot \bm{u} = 0,
\end{dcases}
\label{eq:NS}
\end{equation}
where \(\operatorname{D}/\operatorname{D}t = \partial / \partial t + \bm{u} \cdot \bm \nabla\) is the material derivative, and \(\rho\) is the constant fluid density. 

To adapt the NS equations for quantum computing, we utilize the relationship between the vorticity field \( \bm{\omega} \) and the velocity field \( \bm{u} \) (\( \bm{\omega} = \bm{\nabla} \times \bm{u} \)), discretize the vorticity field into \( N_p \) point vortices, and map their coordinates to complex variables, leading to the generalized Schrödinger equation:
\begin{equation} \label{eq:psi_H}
\frac{\operatorname{d} \psi_j}{\operatorname{d} t} = \operatorname{i} H(\psi_1, \dots, \psi_{N_p}),
\end{equation}
where \( H \) is a vortex-interaction-dependent effective Hamiltonian.
Additionally, the wave function \(\psi_j\) is transformed from the \(j\)-th vortex particle position \(\phi_j\) with 
\begin{equation}
\psi_j = \lambda \left[\phi_j - \left( \int_0^t c(\tau) d\tau + c_0 \right) \right],
\label{eq:encoding}
\end{equation}
where \(c_0\) is an arbitrary constant, \(j\) indexes the vortex particles, and \(\lambda\) is a scaling factor that ensures the normalization condition:
$
\sum_{j=1}^{N_p} |\psi_j|^2 = 1 ~ \text{at} ~ t = 0.$
The time-dependent function \(c(t)\) is defined as:
\begin{equation}
c(t) = \operatorname{i} \lambda \frac{1}{4\pi} \sum_{j \neq k, j = 1, k = 1}^{N_p} \Gamma_k \frac{\psi_j \overline{\psi}_k - \psi_k \overline{\psi}_j}{|\psi_j - \psi_k|^2 \left( \sum_i^{N_p} \overline{\psi}_i \right)},
\label{eq:ct}
\end{equation}
Here \(\Gamma_k\) denotes the vortex strength of the \(k\)-th vortice.

This quantum formulation enables the efficient simulation of vortex dynamics on quantum hardware. 
The evolution of the quantum vortex system is governed by equations that ensure the normalization of the quantum state, facilitating accurate simulations of fluid flows with reduced computational costs. 
Furthermore, we observe that when the vortex particle system exhibits collective motion in a certain direction, $c(t)$ tends to remain relatively stable, exhibiting only minor fluctuations around a constant value. 
Therefore, we also provide a random sampling approximation method, in which we randomly select a subset of time instances of $c(t)$ and average them to approximate their complete set. A detailed description of the QVM is in SUPPLEMENTARY NOTE 1.

\vspace{.5cm}
\noindent\textbf{\large{}Quantum encoding and evolution}

\noindent The fluid dynamics are governed by Eq.~\eqref{eq:NS}, with vortex particle positions represented by~\(\phi\), which can be transformed into wave function~\(\psi\) through Eq.~\eqref{eq:encoding}, and \(\psi\) evolves according to Eq.~\eqref{eq:psi_H}. To generate a dataset for \(\psi\), one must solve Eq.~\eqref{eq:NS} on a grid, extract \(\phi\), and apply the transformation to obtain \(\psi\), thus creating the data needed for training the nonlinear model described by Eq.~\eqref{eq:psi_H}.

We investigate a system governed by Eq.~\eqref{eq:psi_H} with $N_p$ vortices, discretizing time evolution into evenly spaced $N_t$ intervals. For the \(j\)-th vortex particle, its position in configuration space is mapped to a complex variable \(\phi_j\), with $j$ ranging from 1 to $N_p$.
Subsequently, we introduce a transformation that shifts and scales the complex coordinates \(\phi_j\) to define new variables \(\psi_j\). 
This transformation ensures that each component \(\psi_j\) of the wave function is properly normalized in terms of probability and remains conserved during the evolution governed by QVM.
In the case of time discretization, the value of \(\psi_j\) at the \(i\)-th time step is denoted by \(\psi_j^i\). 
At each time step, the collection of \(\psi_j^i\) forms a state vector \( |\bm{\psi}^i\rangle = \begin{bmatrix} \psi^i_1, \psi^i_2, \ldots, \psi^i_{N_p} \end{bmatrix}^\mathrm{T}.\)

To achieve efficient evolution, the initial flow field state \(|\bm{\psi}^0\rangle\) is first encoded into a larger quantum system. 
Specifically, the system's initial state is prepared as a tensor product of the flow field's initial state, which is encoded in \(n_p=\lceil\log_2N_p\rceil\) qubits, and a uniform superposition state encoded in \(n_t=\lceil\log_2N_t\rceil\) qubits.
This procedure effectively generates multiple replicas of the flow field's initial state, as shown in Fig.~\ref{fig:overview}d.
These replicas explore different temporal evolutions simultaneously. The replicas evolve in a branching manner as depicted in Fig.~\ref{fig:overview}f, eventually yielding a superposition of flow field states at different time steps
\( |\bm{\psi}\rangle = \frac{1}{\sqrt{N_t}} \begin{bmatrix}|\bm{\psi}^0\rangle, |\bm{\psi}^1\rangle, \ldots, |\bm{\psi}^{N_t-1}\rangle \end{bmatrix}^\mathrm{T},\) as shown in Fig.~\ref{fig:overview}e.

\begin{figure}
\centering
\includegraphics[width=6cm]{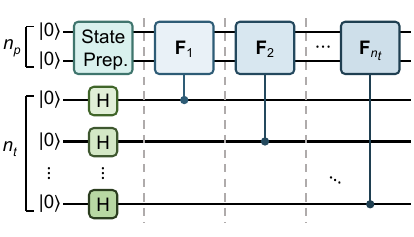} 
\caption{\textbf{The evolution circuit of the QVM method.} The circuit consists of \(n_p\) spatial qubits encoding the spatial state of vortex particles and \(n_t\) temporal qubits encoding the temporal information. The spatial qubits are initialized via the "State Prep." module, while the temporal qubits are prepared in a uniform superposition state using Hadamard gates.
The quantum parallel evolution illustrated in Fig. \ref{fig:overview}\textbf{f} is achieved through sequentially applying the controlled-\(\bm{F}_k\) operations to the system, with the temporal and spatial qubits being the control and target, respectively.
}
\label{fig:circuit}
\end{figure}

The evolution process is implemented with the quantum circuit illustrated in Fig.~\ref{fig:circuit}a. 
The core element of this circuit is the evolution unitary \(\bm{F}_k\), with \(k\) ranging from 1 to $n_t$, which evolves the state from \(i\)-th time step to \((i+2^{k-1})\)-th time step as \( |\bm{\psi}^{i+2^{k-1}}\rangle = \bm{F}_k\, |\bm{\psi}^i\rangle. \)
At the beginning of the evolution, all qubits are initialized in the state \(\ket{0}\). The system then undergoes spatiotemporal evolution through a layered quantum circuit architecture, evolving the quantum state into \(\ket{\bm \psi}\).
Specifically, the spatial qubits are initialized to the desired initial state through the ``State Prep.'' module, while the temporal qubits are prepared into a uniform superposition state via Hadamard gates. 
The temporal qubits then act as control qubits, which sequentially control the implementation of evolution unitaries \(\bm{F}_{1}, \bm{F}_{2}, \dots, \bm{F}_{k}\) on the spatial qubits.
With this controlled-unitary scheme, the quantum state undergoes a tree-like branching evolution from the initial state shown in Fig.~\ref{fig:overview}f, ultimately resulting in a superposition of all system states across $2^{n_t}$ time steps. 
This design fully leverages the parallelism of quantum computing, significantly improving the efficiency of the simulation. 
See SUPPLEMENTARY NOTE 2 for more details.

\vspace{.5cm}
\noindent\textbf{\large{}Experimental setup}

\noindent The algorithm is implemented with eight frequency-tunable transmon qubits on a flip-chip superconducting quantum processor, as shown in Fig.~\ref{fig:overview}c, where blue circles represent the spatial qubits and green circles represent the temporal qupits. Each qubit can be controlled and readout individually. The nearest-neighbouring qubits are connected with a tunable coupler for tuning on and off the effective coupling strength of the two qubits. The single qubit gate with a length of 24 ns is realized by appling a Gaussian-shape microwave pulse with DRAG correction \cite{PhysRevLett.112.240504}. The two-qubit CZ gate, with a duration of around 40 ns, is realized by tuning the frequencies and coupling strength of the two qubits to achieve a close-cycle diabatic transition, a global process involving both qubits as a combined system, between \(\ket{11}\) (both qubits in their excited states) and \(\ket{02}\) (the first qubit in the ground state and the second qubit in its second excited state), accumulating a \(\pi\) phase shift that transforms \(\ket{11}\) into \(-\ket{11}\). The median parallel single-qubit gate and two-qubit gate fidelities are 99.97\% and 99.76\% respectively. See SUPPLEMENTARY NOTE 4 for details.

\vspace{.5cm}
\noindent\textbf{\large{}Nonlinear interactions in vortex systems}

\noindent The leapfrog vortex phenomenon \cite{Aref1983}, described by the NS equation, refers to a mode of motion that occurs when two or more vortex rings interact, corresponding to four or more vortex particles in two dimensions. 
We consider the evolution of a four-vortex-particle system under Eq.~\eqref{eq:psi_H}, with the positions of the four particles initialized to be $ (0, 1), (0, 0.3), (0, -1)$, and $(0, -0.3)$, respectively.
We then apply the transformation defined in Eq.~\eqref{eq:encoding}, with $c_0 = -1.7903$ and $\Gamma = (1, 1, -1, -1)$ to obtain the corresponding quantum state. 
To learn the Hamiltonian of the vortex system, we select $100$ vortex state pairs at time $(t_i, t_i+1)$ to form the training set.
Here, $t_i= 0.18(i-1)$ with $i=1,\dots,100$ is equally sampled from a time range of $[0,18]$, which roughly corresponds to the period of a full leapfrogging cycle.
In our experiment, we use two qubits to encode spatial information of the four vortex particles.
Additionally, six qubits are used to represent $64$ time steps in the evolution. 
We then apply the QVM circuit based on the learned Hamiltonian to prepare the quantum state that encodes the spatiotemporal dynamics of the entire evolution process. 
To obtain the evolution trajectories of the vortex particles, we perform quantum state tomography (QST) on the two spatial qubits, while simultaneously conducting projective measurements on all temporal qubits.
For each time step, we postselect the QST data for the respective temporal state to reconstruct the density matrix, from which the positions of the vortex particles can be extracted (see Methods).
Note that additional global phases, which are experimentally unobservable, are numerically applied at each time step to preserve the symmetry of the system and constrain particle motion along the positive real axis.
For comparison, we conduct ideal (noiseless) and noisy simulations using the same circuit as in the experiments.
In the noisy simulation, we consider error models including depolarizing error of the single- and two-qubit gates, the qubit decoherence, and readout error, with the error rates obtained from experiments (see SUPPLEMENTARY NOTE 4).
In Fig.~\ref{fig:result1}a, we plot the experimentally extracted trajectory of the four vortex particles for time steps outside of the training set, i.e., after $t=18$. 
The results demonstrate a decent agreement between experimental data and noisy simulation.
To characterize the experimental performance, we compare the reconstructed state of the spatial qubits and positions of the vortex particles with those obtained from noiseless numerical simulation (Fig.~\ref{fig:result1}b).
The state fidelity values exceed $97\%$ for all time steps (Fig.~\ref{fig:result1}b, upper panel).
Besides, the position deviations of all four vortex particles from the noiseless simulated values remain below 0.2 throughout the evolution (Fig.~\ref{fig:result1}b, lower panel).
Moreover, using the vortex particle position data from both experimental and noiseless simulation results, we reconstruct the velocity field for each time step based on the Biot-Savart formula and visualize it in Fig.~\ref{fig:result1}c and Fig.~\ref{fig:result1}d, respectively. 
For all illustrated time steps, the velocity fields from experimental data are in close agreement with those of the noiseless simulation, demonstrating that two vortex rings, corresponding to four point vortices in 2D, alternately pass through and move forward in a real flow field while maintaining a degree of symmetry.

\begin{figure*}[ht]
\centering
\includegraphics[width=15cm]{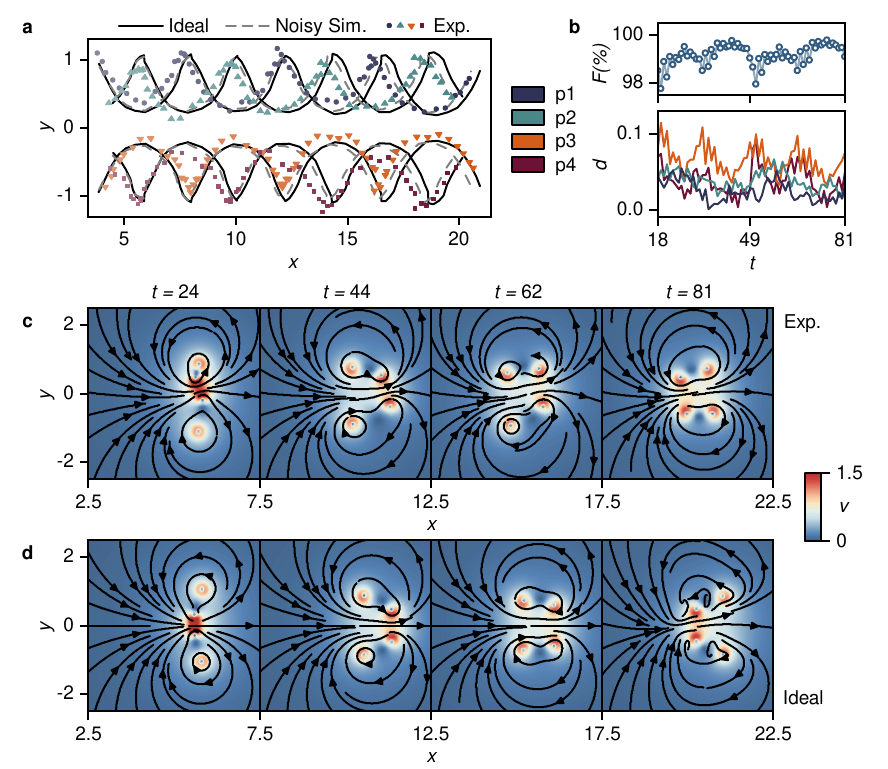}
\caption{
\textbf{Experimental results of nonlinear interactions in vortex systems.}  
\textbf{a,} Trajectories of vortex particles obtained from ideal (noiseless) simulation, noisy simulation, and experimental data.  
\textbf{b,} Fidelity and the position deviations as functions of time.  The fidelity $F$ at each time step $t$ is defined as $F=|\braket{\psi^t_\text{ideal}|\psi^t_\text{exp}}|^2$, where $\ket{\psi^t_\text{exp}}$ and $\ket{\psi^t_\text{ideal}}$ denote the state vector of the spatial qubits obtained through experiment and noiseless numerical simulation, respectively. The position deviation is defined as the Euler distance between the vortex particles in the experiment and that in noiseless simulation $d=\sum_{n=1}^{N_p}|\Vec{r}_\text{exp}^{n,t}-\Vec{r}_\text{ideal}^{n,t}|$, where $\Vec{r}_\text{exp}^{n,t}$ and $\Vec{r}_\text{ideal}^{n,t}$ denote the coordinates of the vortex particle $n$ at time $t$ obtained through experiment and noise-free simulation, respectively.
\textbf{c,d,} Velocity fields and streamlines induced by vortex particles at \(t = 24\), \(44\), \(62\), and \(81\), obtained in the experiment (\textbf{c}) and noiseless simulation (\textbf{d}).  
}
\label{fig:result1}
\end{figure*}
\vspace{.5cm}
\noindent\textbf{\large{}Turbulent vortex particle system}

\noindent To demonstrate the robustness of our method, we further implement it to simulate the dynamics of an eight-vortex-particle system.
The positions and vortex strengths of the eight vortex particles are initialized randomly, akin to a turbulent vortex particle system.
We numerically perform the simulation using MindQuantum\cite{xu2024mindspore}, an open-source quantum computing framework for simulating and implementing quantum algorithms.
We use three spatial qubits and nine temporal qubits to encode the positions and time steps, and then simulate the dynamics of the system under Eq.~\eqref{eq:psi_H}.
Specifically, we select 64 equally spaced time steps, namely, $N_t \in \{0, 4, 8, \dots, 252\}$, within the time range $[0, 256]$ as training data to directly learn the implementation circuit using the variational quantum algorithm (VQA) described in the Methods. Applying the learned circuit to the first frame, we construct the wavefunctions for all time steps from 0 to 511. 
Fig.~\ref{fig:result2}(a--d) visualizes the results at time steps 128 and 384, respectively.
The progression from (a) to (c) illustrates how vortex dynamics evolve while maintaining coherent structures, whereas the corresponding velocity distributions from (b) to (d) quantitatively capturing the spatial variations in flow velocity magnitude and direction.

\vspace{.5cm}
\noindent\textbf{\large{}Viscous vortex particle systems}

\noindent We now turn to a viscous system containing two vortex particles. For viscous vortex particle systems, our data-driven approach enables viscosity terms to be encoded within normalized quantum state vectors that preserve their physical properties during evolution. In this simple two-vortex system, both particle positions and viscous interactions can be represented through normalized wavefunctions, enabling our QVM to compute viscous evolution directly from the learned circuit using VQA on MindQuantum. In contrast, the classical Lagrangian vortex method (LVM) method faces limitations in directly incorporating viscosity terms into the vortex particle evolution.

We employ high-precision grid-based Eulerian methods for solving the NS equation to compute the two-dimensional flow field and extract vortex particle positions as the dataset. The spatial information is encoded using $n_p$ = 1 qubits, while $n_t$ = 4 qubits are allocated for encoding the temporal steps, with the first four frames used to optimize the variational circuit parameters. A comparison of the computational results between the QVM and LVM in Fig. \ref{fig:result2}e and Fig. \ref{fig:result2}f reveals that the former exhibits perfect agreement with ground truth data, whereas the latter demonstrates significant positional deviations indicative of strong viscous dissipation effects. Although 16 frames, corresponding to 4 qubits, are actually computed, only frames {0, 2, \ldots, 14} are visualized due to space limitations.

\begin{figure*}[ht]
\centering
\includegraphics[width=17cm]{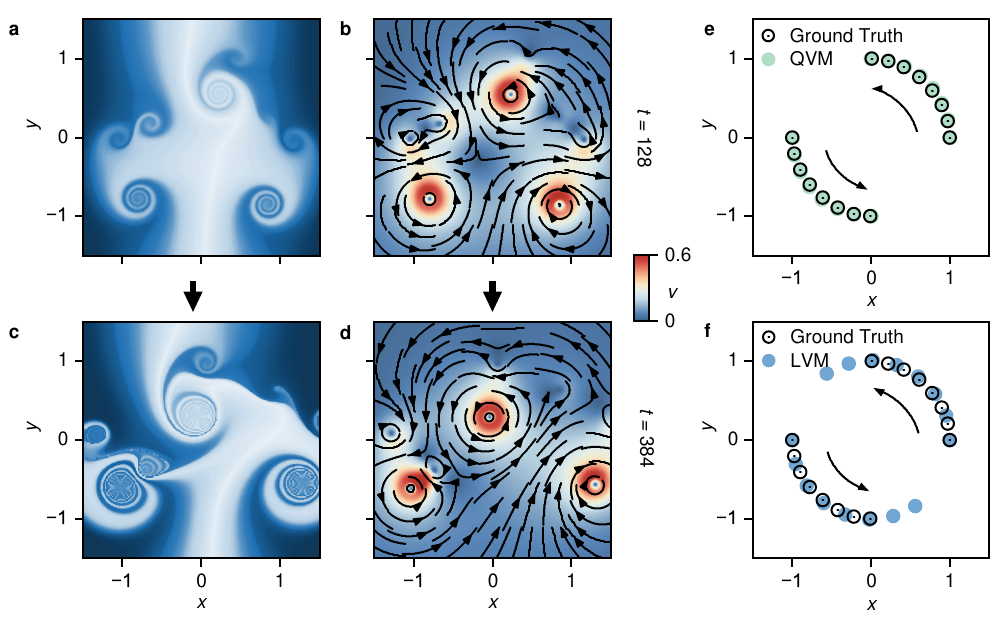}
\caption{\textbf{The simulation results of turbulent vortex particle system  and viscous vortex particle system.}
\textbf{a,} Flow field visualization rendered based on vortex particle position data from initial time to \( t = 128 \).  
\textbf{b,} Velocity distribution of the flow field at \( t = 128 \).  
\textbf{c,} Flow field visualization rendered based on vortex particle position data from the initial time to \( t = 384 \).  
\textbf{d,} Velocity distribution of the flow field at \( t = 384 \).  
\textbf{e,f} Evolution of vortex particles in a viscous fluid computed using QVM (\textbf{e}) and LVM (\textbf{f}).}  
\label{fig:result2}
\end{figure*}
\vspace{.5cm}
\noindent\textbf{\large{}DISCUSSION}

\noindent This study introduces a quantum algorithmic framework designed to simulate intricate vortex interactions in fluid dynamics. 
By directly encoding vortex information into quantum states, the approach circumvents the inherent challenges associated with quantum encoding of fluid fields. 
The proposed algorithm is validated through numerical simulations of turbulent and viscous flows, as well as experimental simulation of the leapfrogging vortex phenomenon on a superconducting quantum processor.
Our spatiotemporal encoding approach leverages both spatial and temporal degrees of freedom to dramatically expand the Hilbert space available for information storage, enabling an exponential increase in capacity compared to classical systems of similar scale. This high-density encoding scheme is particularly well-suited for storing dynamic or high-dimensional data such as neural network parameters or physical trajectories. It also aligns naturally with quantum memory architectures, allowing efficient data retrieval via quantum algorithms including Grover’s search and quantum random access memory~\cite{Ladd2010}. 
Applications span a wide range of fields, including artificial intelligence~\cite{Dean2004}, where data and models can be encoded and processed in parallel, scientific simulations involving complex many-body or time-evolving systems, and quantum cryptography~\cite{Shor1997}, where secure and scalable storage of large keys or quantum states is essential.
Future advances in measurement techniques and error mitigation strategies, such as quantum error correction~\cite{2025Quantum}, noise filtering, and more efficient tomography~\cite{Aaronson:2017qlb}, along with the development of novel quantum algorithms that reduce the need for intensive measurements, could further alleviate the computational burden and enhance the efficiency of quantum simulations.

\vspace{.7cm}
\noindent\textbf{\large{}METHODS}

\noindent\textbf{Implementing the evolution modules} \\
To implement the quantum circuit module, we leverage a data-driven approach. 
We first prepare the training data by approximating the $N_p$-particle system as a linear system described by a parameterized effective Hamiltonian $\bm{H}_{\text{eff}}(\bm{\theta})$, expressed as an $N_p \times N_p$ complex Hermitian matrix.
The training time range $T_{\text{train}}$ is uniformly divided into $N_{\text{train}}$ ($N_{\text{train}} \geq N_p^2$) segments, based on which we extract $N_{\text{train}}$ state pairs separated by a step size $\Delta t_{\text{train}}$ as our training data. While $\Delta t_{\text{train}}$ typically matches the evolution step size, it can be adjusted for specific purposes such as data interpolation.

For the training process, we construct a temporary unitary evolution operator $e^{-i\bm{H}_{\text{eff}}(\bm{\theta})\Delta t_{\text{train}}}$ based on the chosen step size $\Delta t_{\text{train}}$. This unitary matrix is not unique since $\Delta t_{\text{train}}$ can be arbitrarily selected and may differ from the actual evolution step size. 
We then optimize the parameterized unitary matrix to match the evolution for all state pairs in the training set.
The evolution matrices $\bm{F}_k$ are then constructed similarly through $e^{-i\bm{H}_{\text{eff}}(\bm{\theta})(2^k - 1)\Delta t_{\text{predict}}}$, where the time step $\Delta t_{\text{predict}}$ can be theoretically arbitrary. With the determined evolution matrices $\bm{F}_k$, the CPFlow method~\cite{Nemkov2023_Efficient} is then employed to optimize the quantum circuit design. This involves initializing multiple random parameter sets for the template circuit (ansatz) and performing parallel optimization based on the loss function. The most promising candidates are selected according to their loss values and CZ gate counts.

In addition to introducing a Hamiltonian as an intermediary to accommodate hardware limitations, the evolution module can alternatively be implemented by designing a parameterized quantum circuit ansatz and employing VQA to minimize the cost function, thereby learning vortex dynamics directly from data. We theoretically derive the complete quantum circuit for cost evaluation and gradient computation methods in SUPPLEMENTARY NOTE 3.

\vspace{.3cm}
\noindent\textbf{Extracting the spatial information} \\
In our experiment, we use QST to obtain the density matrix of the spatial qubits.
We then extract the spatial information of the vortex particles from the eigenstate of the density matrix with the maximum eigenvalue.
The method is valid for the depolarization error with a sufficiently small error rate $p$.
Under the depolarizing error channel, the experimental density matrix can be modeled as $\rho_\text{exp} \rightarrow (1-p)\rho + (p/2^N)I$, where $\rho=|\psi\rangle\langle\psi|$ is the ideal density matrix and \(N\) is the number of qubits. 
Thus, \(\ket{\psi}\) is also the eigenstate of the experimental density matrix with an eigenvalue of $1-p+p/2^N$, which remains the largest among all eigenvalues for a small $p$.
However, in our experiment, there are also coherent error channels, which can change the eigenstates of $\rho_\text{exp}$, introducing additional errors during the spatial information extraction procedure.
To mitigate this error, we apply Pauli Twirling~\cite{PhysRevA.72.052326}, which can effectively transform all errors into depolarizing errors.
Specifically, we average the QST data obtained from $50$ equivalent circuits, which are generated by randomly replacing the CZ gates in the original circuit with 16 equivalent gates realized by applying additional Pauli gates before and after the CZ gate.

\bibliographystyle{modified-apsrev4-2.bst}
%

\vspace{.3cm}
\noindent\textbf{Acknowledgement}\\
We thank Haohua Wang, Zhen Wang, Hekang Li, Qiujiang Guo, and Pengfei Zhang for their contributions to the experimental infrastructure.
The device was fabricated at the Micro-Nano Fabrication Center of Zhejiang University. 
The authors acknowledge the support of the National Key Research and Development Program of China (Grant No. 2023YFB4502600) and the National Natural Science Foundation of China (Grants No. 12302294, 12432010, and 12525201).

\vspace{.3cm}
\noindent\textbf{Author contributions}\\
S.X. conceived the theoretical ideas. J.Z. and K.W. carried out the experiment under the supervision of C.S.; Z.W. and J.Z. designed the digital quantum circuits under the supervision of C.S.; Z.W., S.X., W.Z., Y.Z. and Y.Y. conducted the theoretical analysis. J.Z., K.W., Z.Z., and Z.B. contributed to the experimental setup. All authors contributed to data analysis, discussion of the results, and writing of the manuscript.

\newpage

\clearpage
\onecolumngrid

\beginsupplement
\renewcommand{\citenumfont}[1]{S#1}
\renewcommand{\bibnumfmt}[1]{[S#1]}




\begin{center}
    \textbf{\large Supplementary Information for \\
    ``Simulating fluid vortex interactions on a superconducting quantum processors''}\\[.2cm]
    Ziteng Wang$^{1,*}$, Jiarun Zhong$^{2,*}$, Ke Wang$^{2}$, Zitian Zhu$^{2}$, Zehang Bao$^{2}$, Chenjia Zhu$^{1}$, Wenwen Zhao$^{1}$, Yaomin Zhao$^{3}$, Yue Yang$^{3}$, Chao Song$^{2,\dagger}$, and  Shiying Xiong$^{1,\ddagger}$ \\[.1cm]
    {\itshape ${}^1$ Department of Engineering Mechanics, School of Aeronautics and Astronautics, Zhejiang University, Hangzhou, 310027, China} \\
    {\itshape ${}^2$ School of Physics, ZJU-Hangzhou Global Scientific and Technological Innovation Center,\\ and Zhejiang Key Laboratory of Micro-nano Quantum Chips and Quantum Control, Zhejiang University, Hangzhou 310027, China} \\
    {\itshape ${}^3$ State Key Laboratory for Turbulence and Complex Systems, College of Engineering, Peking University, and HEDPS-CAPT, Peking University, Beijing, 100871, China} 
\end{center}

\vspace{1em}

\section*{Supplementary Note 1: quantum vortex method}
\label{QuantumVortexMethod}
We propose the quantum vortex method (QVM) by formulating the Navier–Stokes (NS) equations within a quantum framework, emphasizing key equations and their normalization conditions to facilitate the simulation of vortex dynamics using quantum algorithms. QVM adopts a particle-based approach for modeling incompressible fluid flows, is particularly well-suited for capturing vortical structures, and leverages quantum computing to potentially enhance computational efficiency.

The dynamics of an incompressible fluid are governed by the Navier--Stokes (NS) equations for velocity \(\bm{u}(\bm{x}, t)\):
\begin{equation}
\begin{dcases}
\frac{D \bm{u}}{D t} = -\frac{1}{\rho} \bm \nabla p + \nu \nabla^2 \bm{u} + \bm{f}, \\
\bm \nabla \cdot \bm{u} = 0,
\end{dcases}
\label{eq:uNS}
\end{equation}
where \(t\) is time, \(D/D t = \partial / \partial t + \bm{u} \cdot \bm \nabla\) is the material derivative, \(p\) is the pressure field, \(\nu\) is the kinematic viscosity, \(\rho\) is the constant density, and \(\bm{f}\) represents body forces. To reformulate the system in terms of vorticity, we define the vorticity field \(\bm{\omega} = \bm \nabla \times \bm{u}\), leading to an alternative representation of the NS equations:
\begin{equation}
\frac{D \bm{\omega}}{D t} = (\bm{\omega} \cdot \bm \nabla) \bm{u} + \nu \nabla^2 \bm{\omega} + \bm  \nabla \times \bm{f}.
\label{eq:wNS}
\end{equation}

We discretizes the vorticity field into a system of point vortices, with each vortex particle characterized by a position \(\bm{x}_j\) and a vortex strength \(\bm{\Gamma}_j\). The evolution of the vortex particles is described by the following system of ordinary differential equations (ODEs):
\begin{equation}
\begin{dcases}
\frac{d \bm{\Gamma}_j}{d t} = \bm{\gamma}_j, \\
\frac{d \bm{x}_j}{d t} = \bm{u}_j + \bm{v}_j,
\end{dcases}
\label{eq:vortex}
\end{equation}
where \(\bm{\Gamma}_j\) represents the integral of vorticity over the \(j^{\text{th}}\) computational element, \(\bm{u}_j\) is the induced velocity at particle \(j\), and \(\bm{v}_j\) is the drift velocity. The induced velocity \(\bm{u}_j\) is computed using Biot--Savart's law:
\begin{equation}
\bm{u}_j = \frac{1}{2(n_d-1) \pi} \sum_{k \neq j}^{N_p} \frac{\bm{\Gamma}_k \times (\bm{x}_j - \bm{x}_k)}{|\bm{x}_j - \bm{x}_k|^{n_d}},
\label{eq:BS}
\end{equation}
where \(n_d\) denotes the dimensionality of the flow and \(N_p\) is the total number of vortex particles. 

In ideal two-dimensional flow, the drift velocity \(\bm{v}_j\) and the rate of change of particle strength \(\bm{\gamma}_j\) can both be set to zero, simplifying the system further. Under these simplified conditions, the dynamical equations for the particle positions $(x_j, y_j)$ with strengths $\Gamma_j$ can be expressed in a generalized Hamiltonian \cite{xiong2020nonseparable} form. Specifically, we have
\begin{equation}
   \Gamma_j \frac{\partial x_j}{\partial t} = -\frac{\partial \mathcal{H}^p}{\partial y_j},
   \quad
   \Gamma_j \frac{\partial y_j}{\partial t} = \frac{\partial \mathcal{H}^p}{\partial x_j},
\label{eq:patialHp}
\end{equation}
where the Hamiltonian is given by
\begin{equation}
   \mathcal{H}^p
   = \frac{1}{4\pi}
     \sum_{j,k=1}^{N_p} \Gamma_j \Gamma_k
     \log\left((x_j - x_k)^2 + (y_j - y_k)^2\right).
\label{eq:Hp}
\end{equation}

We transform the vortex particle coordinates \( \bm{x}_j \) into complex variables \( \phi_j = x_j + \textrm{i} y_j \), where \( j \) indexes the vortex particles, making the representation suitable for quantum computing.
Taking the partial derivative of \(\phi\) with respect to \(t\) and combining it with \eqref{eq:patialHp}, we obtain
\begin{equation}
\frac{\partial \phi_j}{\partial t}  = \frac{1}{\Gamma_j}\left[-\frac{\partial \mathcal{H}^p}{\partial y_j}+\textrm{i} \frac{\partial \mathcal{H}^p}{\partial x_j}\right]=\frac{\textrm{i}}{\Gamma_j}\left[\frac{\partial \mathcal{H}^p}{\partial x_j}+\textrm{i}\frac{\partial \mathcal{H}^p}{\partial y_j}\right].
\end{equation}

The partial derivatives of the Hamiltonian  with respect to particle coordinates can be explicitly computed using equation \eqref{eq:Hp} as:

\begin{equation}
\frac{\partial \mathcal H^p}{\partial x_j} = \frac{1}{2\pi} \sum_{k=1}^{N_p} \Gamma_j\Gamma_k \frac{x_j-x_k}{|\phi_j-\phi_k|^2},\quad \frac{\partial \mathcal H^p}{\partial y_j} = \frac{1}{2\pi} \sum_{k=1}^{N_p} \Gamma_j\Gamma_k \frac{y_j-y_k}{|\phi_j-\phi_k|^2}.
\end{equation}

Combining these results, we find a concise complex representation:

\begin{equation}
\frac{\partial \mathcal H^p}{\partial x_j} +\textrm{i}\frac{\partial \mathcal H^p}{\partial y_j}  = \frac{1}{2\pi} \sum_{k=1}^{N_p} \Gamma_j\Gamma_k \frac{\phi_j-\phi_k}{|\phi_j-\phi_k|^2}.
\end{equation}

Thus, the evolution equation for the complex coordinates in an ideal two-dimensional flow is then expressed as:

\begin{equation}
\frac{\operatorname{d} \phi_j}{\operatorname{d} t} = \textrm{i} \frac{1}{2\pi} \sum_{\substack{k = 1 \ k \neq j}}^{N_p} \Gamma_k \frac{\phi_j - \phi_k}{|\phi_j - \phi_k|^2}.
\label{eq:phij}
\end{equation}

To prepare the system for quantum computation, we introduce a transformation that shifts and scales the complex coordinates \(\phi_j\), defining new variables \(\psi_j\):
\begin{equation}
\psi_j = \lambda \left[\phi_j - \left( \int_0^t c(\tau) d\tau + c_0 \right) \right],
\label{eq:psic}
\end{equation}
where \(c_0\) is an arbitrary constant, and \(\lambda\) is a scaling factor that ensures the normalization condition:
\begin{equation}
\sum_{j=1}^{N_p} |\psi_j|^2 = 1 \quad \text{at} \quad t = 0.
\label{eq:norm_psij}
\end{equation}

The time-dependent function \(c(t)\) is defined as:
\begin{equation}
c(t) = \operatorname{i} \lambda \frac{1}{4\pi} \sum_{j \neq k, j = 1, k = 1}^{N_p} \Gamma_k \frac{\psi_j \overline{\psi}_k - \psi_k \overline{\psi}_j}{|\psi_j - \psi_k|^2 \left( \sum_i^{N_p} \overline{\psi}_i \right)}.
\label{eq:ct}
\end{equation}

This transformation carries clear physical significance in scenarios where the fluid exhibits a pronounced directional flow. In such cases, the selected point $c_0$ can be regarded as a reference point, whose motion represents the collective movement of the vortex particle system, while the absolute motion of each vortex particle is thus redefined as motion relative to this reference point, reflected in the wavefunction as a redistribution among components under normalization constraints. One main reason for treating the motion of the reference point separately is that when the vortex particle system exhibits collective motion in a certain direction, its overall velocity remains relatively stable with only minor fluctuations around a constant, thereby enabling simplifications such as constant approximation during wavefunction-based reconstruction. This transformation enables the use of quantum algorithms for the evolution of vortex dynamics, with the complex coordinates \(\psi_j\) evolving according to
\begin{equation} \label{eq:psij1}
 \frac{\operatorname{d} \psi_j}{\operatorname{d} t}  = \operatorname{i}\frac{\lambda^2}{4\pi} \left[\sum_{k=1}^{N_p} 2\Gamma_k \frac{\psi_j-\psi_k}{|\psi_j-\psi_k|^2} - \sum_{m,n=1}^{N_p} \Gamma_n \frac{ \psi_m\overline{\psi}_n-\psi_n\overline{\psi}_m}{|\psi_m-\psi_n|^2\left(\sum_{i}^{N_p}\overline{\psi}_i\right)}\right].
\end{equation}
This quantum evolution equation represents the time evolution of the vortex system, which can be efficiently simulated using quantum computational techniques.

The transformation in equation \(\eqref{eq:psic}\) ensures the normalization of the quantum state. Specifically, the following theorem guarantees that the total probability of the system remains conserved over time.

\begin{theorem}
If \(\psi_j\) satisfies the transformation in equation \(\eqref{eq:psic}\), the normalization condition \(\sum_{j=1}^{N_p} |\psi_j|^2 = 1\) holds for all time \(t\), implying that
\begin{equation}
\frac{\operatorname{d}}{\operatorname{d} t} \sum_{j=1}^{N_p} |\psi_j|^2 = 0.
\label{eq:normlization}
\end{equation}
\end{theorem}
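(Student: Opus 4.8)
The plan is to differentiate $\sum_j|\psi_j|^2$ directly and show that the result is the real part of a purely imaginary quantity. Since $|\psi_j|^2=\psi_j\overline\psi_j$, we have $\frac{\operatorname{d}}{\operatorname{d}t}\sum_{j=1}^{N_p}|\psi_j|^2 = 2\operatorname{Re}\sum_{j=1}^{N_p}\overline\psi_j\,\dot\psi_j$, so it suffices to prove that $\sum_j\overline\psi_j\dot\psi_j$ is purely imaginary.

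First I would differentiate the transformation \eqref{eq:psic}. Setting $\Phi(t)=\int_0^t c(\tau)\,\operatorname{d}\tau+c_0$, so that $\psi_j=\lambda(\phi_j-\Phi)$ with $\dot\Phi=c(t)$, gives $\dot\psi_j=\lambda\bigl(\dot\phi_j-c(t)\bigr)$. Because the shift $\Phi$ is common to every particle, the pairwise differences are unaffected: $\phi_j-\phi_k=(\psi_j-\psi_k)/\lambda$ and $|\phi_j-\phi_k|^2=|\psi_j-\psi_k|^2/\lambda^2$. Substituting the vortex dynamics \eqref{eq:phij} then expresses $\dot\psi_j$ entirely in the $\psi$ variables, recovering \eqref{eq:psij1}.

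The central step is to assemble $\sum_j\overline\psi_j\dot\psi_j$. The Biot--Savart part contributes $\operatorname{i}\tfrac{\lambda^2}{2\pi}S$, where $S=\sum_{j\neq k}\Gamma_k\,\overline\psi_j(\psi_j-\psi_k)/|\psi_j-\psi_k|^2$, while the term $-\lambda c(t)\sum_j\overline\psi_j$ contributes $-\operatorname{i}\tfrac{\lambda^2}{4\pi}T$, where $T=\sum_{j\neq k}\Gamma_k(\psi_j\overline\psi_k-\psi_k\overline\psi_j)/|\psi_j-\psi_k|^2$; here the decisive simplification is that the factor $\sum_i\overline\psi_i$ sitting in the denominator of $c(t)$ in \eqref{eq:ct} is cancelled exactly by the prefactor $\sum_j\overline\psi_j$. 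Hence $\sum_j\overline\psi_j\dot\psi_j=\operatorname{i}\tfrac{\lambda^2}{4\pi}(2S-T)$, and the theorem reduces to showing that $2S-T$ is real. Merging the two sums over the common index set $\{j\neq k\}$, the numerator of each term collapses to $2|\psi_j|^2-\overline\psi_j\psi_k-\psi_j\overline\psi_k=2|\psi_j|^2-2\operatorname{Re}(\overline\psi_j\psi_k)$, which is real; since every $\Gamma_k$ is a real vortex strength and every $|\psi_j-\psi_k|^2$ is real and positive, $2S-T\in\mathbb{R}$, so $\sum_j\overline\psi_j\dot\psi_j$ is purely imaginary and its real part vanishes.

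I expect the main obstacle to be the bookkeeping in merging the two double sums and recognizing the cancellation that renders the numerator real: specifically, that $-2\overline\psi_j\psi_k$ from $2S$ combines with $+\psi_k\overline\psi_j$ from $-T$ to leave $-\overline\psi_j\psi_k$, which together with $2|\psi_j|^2$ and the leftover $-\psi_j\overline\psi_k$ forms $2|\psi_j|^2-2\operatorname{Re}(\overline\psi_j\psi_k)$. In other words, the delicate point is appreciating that the precise definition of $c(t)$ in \eqref{eq:ct}, and in particular its $\sum_i\overline\psi_i$ denominator, is engineered exactly so that this cancellation occurs; once the algebraic identity is established, $\frac{\operatorname{d}}{\operatorname{d}t}\sum_j|\psi_j|^2=0$ follows immediately.
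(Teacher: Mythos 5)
Your proof is correct and takes essentially the same route as the paper's: multiply the evolution equation by $\overline{\psi}_j$, sum over $j$, exploit the fact that the $\sum_i\overline{\psi}_i$ denominator in $c(t)$ is cancelled by the prefactor $\sum_j\overline{\psi}_j$, and conclude that the real part of the result vanishes. The only cosmetic difference is in the ordering: the paper takes real parts particle-by-particle (converting the Biot--Savart numerator into $\psi_j\overline{\psi}_k-\psi_k\overline{\psi}_j$) and then observes that the two resulting double sums cancel identically after substituting $c(t)$, whereas you defer the real part to the end and show the merged bracket $2S-T$ is real term by term.
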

\begin{proof}
We multiply both sides of equation \eqref{eq:psij1} by the complex conjugate \(\overline{\psi}_j\) and take the real part to obtain
\begin{equation}
\frac{\operatorname{d} |\psi_j|^2}{\operatorname{d} t} = 2 \text{Re} \left[ \lambda^2 i \frac{1}{2\pi} \sum_{k \neq j}^{N_p} \Gamma_k \frac{\psi_j - \psi_k}{|\psi_j - \psi_k|^2} \overline{\psi}_j - \lambda c(t) \overline{\psi}_j \right].
\end{equation}
Simplifying further, we find
\begin{equation}
\frac{\operatorname{d} |\psi_j|^2}{\operatorname{d} t} =
= \lambda^2 i \frac{1}{2 \pi} \sum_{k \neq j}^{N_p} \Gamma_k \frac{\psi_j \overline{\psi}_k - \psi_k \overline{\psi}_j}{|\psi_j - \psi_k|^2} - 2 \text{Re} \left[ \lambda c(t) \overline{\psi}_j \right].
\end{equation}
Summing over \(j\) from 1 to \(N_p\), we obtain
\begin{equation}
\frac{\operatorname{d}}{\operatorname{d} t} \sum_{j=1}^{N_p} |\psi_j|^2 = \lambda^2 i \frac{1}{2 \pi} \sum_{k \neq j}^{N_p} \sum_{j=1}^{N_p} \Gamma_k \frac{\psi_j \overline{\psi}_k - \psi_k \overline{\psi}_j}{|\psi_j - \psi_k|^2} - 2 \text{Re} \left[ \lambda c(t) \sum_{j=1}^{N_p} \overline{\psi}_j \right].
\label{eq:dsumpsi}
\end{equation}
Finally, substituting the expression for \(c(t)\) from equation \eqref{eq:ct} into \eqref{eq:dsumpsi}, we arrive at the normalization condition \eqref{eq:normlization}. Thus, the normalization of \(\psi_j\) is preserved over time.
\end{proof}

We have reformulated the ODEs (\eqref{eq:vortex}) into the system described by \eqref{eq:psij1} with the normalization condition \eqref{eq:norm_psij}, which offers advantages for quantum algorithm design and simulating two-dimensional vortex dynamics. Converting from this system back to the original vortex dynamics (\eqref{eq:vortex}) requires specifying the constants \(\lambda\) and \(c_0\), as well as calculating the time-varying function \(c(t)\) from \eqref{eq:ct}. To reconstruct vortex evolution, we set \(\lambda\) and \(c_0\) based on the initial conditions, which define the problem's central position and scale. Although computing \(c(t)\) is computationally intensive and prone to integration errors, numerical experiments show that \(c(t)\) remains nearly constant, though adjustments are needed based on sampled time-evolution data.

We introduce a straightforward approximation method for reconstruction, referred to as the random sampling approximation. By randomly selecting time points according to a predetermined ratio based on the measured data, we calculate the average value of \( c(t) \). Figure \ref{fig:error-SP} illustrates the error progression for different sampling ratios in the leapfrog example. Even with a low sampling frequency, the error remains manageable. Increasing the sampling ratio can reduce the error to approximately 0.6\%, which is much lower than the measurement error in quantum computing. As a result, this approach significantly simplifies the process and reduces the computational burden of reconstructing vortex evolution.

\begin{figure}
\centering    
\includegraphics[width=8cm]{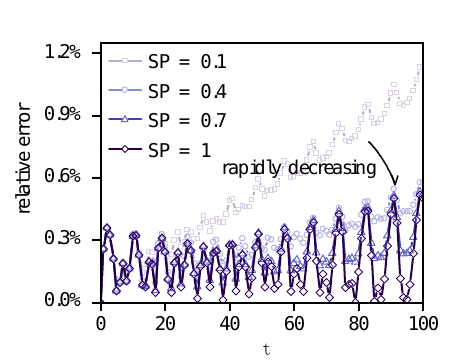}
\caption{
Time evolution of the relative error over the first 100 frames for four different sampling proportions (SP), namely 0.1, 0.4, 0.7, and 1 with initial conditions $x_1=x_2=x_3=x_4=0$, $y_1=-y_3=1$, and $y_2=-y_4=0.3$. For each sampling proportion (SP), a subset of frames is randomly selected, and the variable $c$ is computed for these frames according to Equation~\eqref{eq:ct}. The averaged value, denoted by $c_{\text{const}}$, is then used to replace the time-dependent function $c(t)$ in the evaluation of the integral defined in Equation~\eqref{eq:psic}. As the SP increases, the point-to-point error shows a rapid decreasing trend.}

\label{fig:error-SP}
\end{figure}

Figure \ref{fig:max error} compares the vortex element evolution predicted by our approximation method. With a sampling ratio of 0.4, the evolution across multiple dimensionless positions shows a max error of just 0.147, which is negligible in the context of the NISQ era of quantum computing.

\begin{figure}
\centering    
\includegraphics[width=13cm]{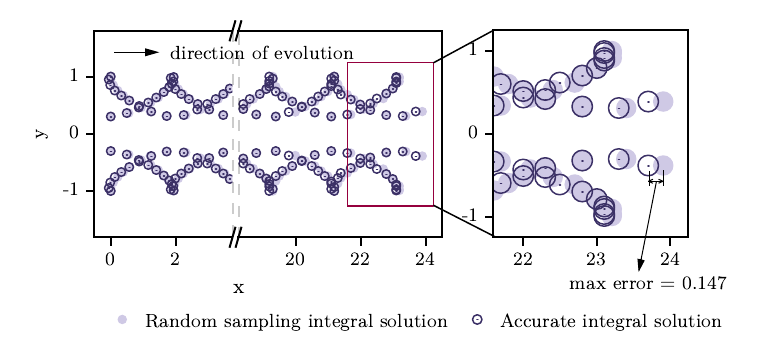}
\caption{Under a sampling proportion of 0.4, the comparative graph illustrates a maximum error of 0.147 resulting from the random sampling approximation over a displacement of 24 dimensionless units, juxtaposed with high-precision simulation results.}
\label{fig:max error}
\end{figure}

\section*{Supplementary Note 2: Construction of quantum circuits}
\label{ConstructionQuantumCircuits}

We employ quantum algorithms to solve \eqref{eq:psij1}. To achieve this, we encode the wave function $\psi_j(t)$, representing the space-time evolution of vortex particles, into quantum states. We investigate a system with $2^{n_p}=N_p$ vortices, discretizing time evolution into evenly spaced $2^{n_t}=N_t$ intervals. The variable $\psi_{j}^{i}$ represents the component of the system's wave function for a vortex particle with spatial index $j$ and temporal index $i$. Here, $j$ ranges from 1 to $N_p$, and $i$ ranges from 1 to $N_t$.

At each time point $i\Delta t$, the spatial position state of the wave function, denoted as $\ket{\bm \psi^i}$, satisfies the normalization condition and can be represented by a quantum state as follows:
\begin{equation}
\ket{\bm \psi^i} = \begin{bmatrix}
\psi^i_1\\
\vdots\\
\psi^i_{N_p}\\
\end{bmatrix}.
\end{equation}
Then we gather $\ket{\bm \psi^i}$ from all temporal instances into the quantum state $\ket{\bm \psi}$:
\begin{equation}
\ket{\bm \psi} = \frac{1}{\sqrt{N_t}} \begin{bmatrix}
\ket{\bm \psi^0}\\
\vdots\\
\ket{\bm \psi^{N_t-1}}\\
\end{bmatrix},
\label{eq:whole_psi}
\end{equation}
thereby completing the quantum encoding of the vortex particles.

Inspired by DMD methods \cite{kang2023force}, we assume a linearized temporal evolution of vortex particles represented by the mapping:
\begin{equation}
|\bm \psi^{i+1}\rangle = \bm F(\bm \theta) |\bm  \psi^i\rangle,
\label{eq:mapping}
\end{equation}
within the quantum circuit, where $\bm \theta(\theta_1,\theta_2,\cdots,\theta_L)$
denotes the variable parameters optimized by a variational algorithm, $L$ represents the total number of parameters. Starting from the initial state $|\bm{\psi_{R}}^0\rangle$, the system evolves through all states in the sequence over time without the need for repeated measurements or state preparations. The final state $|\bm{\psi}\rangle$ stores all the information of the evolution process, which can be expressed as:
\begin{equation}
|\bm \psi
\rangle=
\bm{\mathcal{U}}(\bm \theta)
|\bm \psi_{R}^0\rangle.
\end{equation}

Here, the initial state $|\bm \psi_{R}^0\rangle$ is given by:
\begin{equation}
|\bm \psi_{R}^0\rangle = \frac{1}{\sqrt{N_t}} \begin{bmatrix}
|\bm \psi^0 \rangle \\
|\bm \psi^0 \rangle \\
\vdots\\
|\bm \psi^0 \rangle \\
|\bm \psi^0 \rangle \\
\end{bmatrix}= (H|0 \rangle)^{\otimes {n_t}} \otimes |\bm \psi^0\rangle,
\label{eq:psiR}
\end{equation}
where $H$ denotes the Hadamard gate, and the wave function $\bm \psi^0$ is prepared by the circuit $\bm U_R^0$, which acts on the ground state to yield the corresponding quantum state, expressed as
\begin{equation}
    |\bm\psi^0 \rangle = \bm U_{R}^0 |\bm 0 \rangle.
    \label{eq:UR0}
\end{equation}
In experiments on quantum hardware, the circuit for preparing the initial state is obtained via the CPFlow algorithm.
Additionally, the matrix operator $\bm{\mathcal{U}}(\bm \theta)$ is given by:
\begin{equation}
\bm{\mathcal{U}}(\bm \theta)  = \begin{bmatrix}
\bm{1} & \bm{0} & \cdots & \bm{0} & \bm{0} \\
\bm{0} & \bm{F}_1(\bm{\theta}) & \cdots & \bm{0} & \bm{0} \\
\vdots & \vdots & \ddots & \vdots & \vdots \\
\bm{0} & \bm{0} & \cdots & \bm{F}_{k-1}(\bm{\theta}) & \bm{0} \\
\bm{0} & \bm{0} & \cdots & \bm{0} & \bm{F}_k(\bm{\theta}) \\
\end{bmatrix}
,
\end{equation}
where \(\bm F_k (\bm \theta)= (\bm F(\bm \theta))^{2^{k-1}}\), and $\bm F(\bm \theta)$ assumes the same operation for each timestep, serving as the fundamental module in our experiment. 
The matrix $\bm{\mathcal{U}}(\bm \theta)$ can be expressed as a product of block-diagonal matrices:

\begin{equation}
\bm{\mathcal{U}}(\bm \theta) = \prod_{k=1}^{n_t} \bm{P}_k (\bm \theta), 
\label{eq:decomposdU}
\end{equation}
where \(\bm{P}_k (\bm \theta)\) is a diagonal matrix of size \(N_t \times N_t\). Each diagonal element of \(\bm{P}_k (\bm \theta)\) is defined as follows:

\begin{equation}
(\bm{P}_k (\bm \theta))_{i,i} =
\begin{cases} 
\bm{F}_k (\bm \theta), & \text{if } (i \bmod 2^k) \geq 2^{k-1}, \\
\bm{1}, & \text{if } (i \bmod 2^k) < 2^{k-1}.
\end{cases}
\end{equation}

This matrix decomposition method ensures that $\bm{\mathcal{U}}(\bm \theta)$ comprehensively captures the system's evolution with the desired modular structure. Specifically, each $\bm{P}_k (\bm \theta)$ applies the evolution operator $\bm{F}_k (\bm \theta)$ only to the subspace of the quantum state where the $k$-th bit in its binary representation is 1. This condition-based control mechanism allows each $\bm{P}_k (\bm \theta)$ to correspond to a controlled gate operation in the quantum circuit, governed by the state of the $k$-th qubit. This hierarchical and bit-wise controlled structure efficiently constructs the matrix $\bm{\mathcal{U}}(\bm \theta)$ and predicts the states at $N_t$ future moments without repeated state preparations, resembling a tree structure as shown in Fig. \ref{fig:tree} that minimizes classical-quantum interactions.

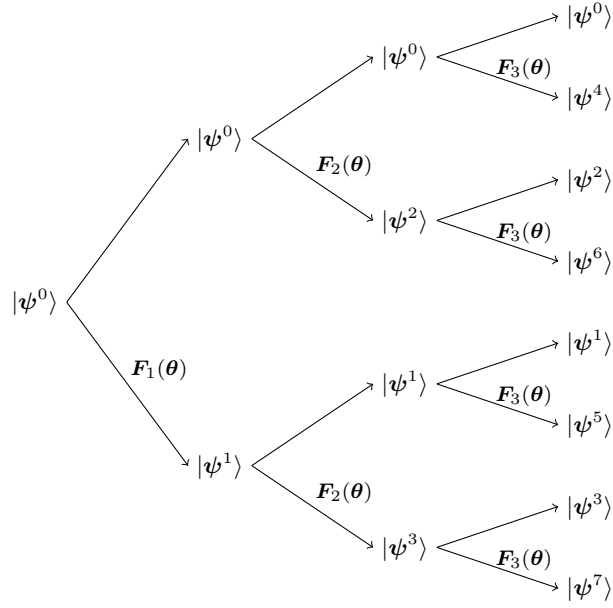
\begin{figure}
      \centering

      \begin{tikzpicture}[semithick,->,grow'=right] 
\tikzset{level distance=70pt, sibling distance=15pt} 
\tikzset{every tree node/.style={anchor=base west}}
\Tree
		[.{$|\bm \psi^0\rangle$}
			[.{$|\bm \psi^0\rangle$ }
				[.{$|\bm \psi^0\rangle$}
    				[.{$|\bm \psi^0\rangle$} ] 
                     \edge node[midway, above right,inner sep=-1pt, font=\footnotesize] {$\bm F_3 (\bm \theta)$};
    				[.{$|\bm \psi^4\rangle$} ]
                ] 
                 \edge node[midway, above right,inner sep=1pt, font=\footnotesize] {$\bm F_2 (\bm \theta)$};
				[.{$|\bm \psi^2\rangle$}
                    [.{$|\bm \psi^2\rangle$} ] 
                     \edge node[midway, above right,inner sep=-1pt, font=\footnotesize] {$\bm F_3 (\bm \theta)$};
    				[.{$|\bm \psi^6\rangle$} ]
                ]
            ]
             \edge node[midway, above right,inner sep=1pt, font=\footnotesize] {$\bm F_1 (\bm \theta)$};
			[.{$|\bm \psi^1\rangle$} 
				[.{$|\bm \psi^1\rangle$} 
    				[.{$|\bm \psi^1\rangle$} ] 
                     \edge node[midway, above right,inner sep=-1pt, font=\footnotesize] {$\bm F_3 (\bm \theta)$};
    				[.{$|\bm \psi^5\rangle$} ]
                ] 
                 \edge node[midway, above right,inner sep=1pt, font=\footnotesize] {$\bm F_2 (\bm \theta)$};
				[.{$|\bm \psi^3\rangle$} 
    				[.{$|\bm \psi^3\rangle$} ] 
                     \edge node[midway, above right,inner sep=-1pt, font=\footnotesize] {$\bm F_3 (\bm \theta)$};
    				[.{$|\bm \psi^7\rangle$} ]
                ]
			]
		]
	]
\end{tikzpicture}
      \caption{The evolution of quantum states exhibits a tree-like structure, branching from an initial state into superpositions of temporal sequences. At each step, unitary operators simultaneously act on half of the components of the quantum state, while the other half remains unchanged. As a schematic diagram, only the initial part of the tree is shown to illustrate the branching pattern.}
\label{fig:tree}
\end{figure}

Specifically, for a given value of $n_t$, the circuit depicted in Fig. \ref{temporal evolution} can be built using a decomposition as  \eqref{eq:decomposdU}.
Initially, the system state is described by \eqref{eq:psiR}. After applying the controlled quantum gate $\bm F_1 (\bm \theta)$, with the control qubit initially in the $(|0 \rangle+|1 \rangle)/\sqrt{2}$ state through the Hadamard gate, $\bm F_1 (\bm \theta)$ acts on one half of the state, leaving the other half unchanged, resulting in:
\begin{equation}
|\bm\psi_R^1  \rangle  = \frac{\sqrt{2}}{2} (H|0 \rangle)^{\otimes (n_t-1) }(|0 \rangle | \bm \psi^0 \rangle + |1 \rangle  | \bm \psi^1 \rangle) .
\end{equation}
Similarly, the state after applying the controlled quantum gate $\bm F_2 (\bm \theta)$ can be written as:
\begin{equation}
|\bm\psi_R^2  \rangle = (\frac{\sqrt{2}}{2} )^2 (H|0 \rangle)^{\otimes (n_t-2) }(|0 \rangle |0 \rangle| \bm \psi^0 \rangle  + |0 \rangle |1 \rangle | \bm \psi^1 \rangle + |1 \rangle  |0 \rangle | \bm \psi^2 \rangle+ |1 \rangle  |1 \rangle | \bm \psi^3 \rangle)
\end{equation}
After a sequence of analogous controlled gate operations, the system reaches its final state as:
\begin{equation}
\begin{aligned}
|\bm\psi \rangle=|\bm\psi_R^{n_t} \rangle
=&(\frac{\sqrt{2}}{2} )^{n_t} ( |0 \rangle  |0 \rangle...|0 \rangle | \bm \psi^0 \rangle+|0 \rangle  |0 \rangle...|1 \rangle | \bm \psi^1 \rangle+\cdots+ |1 \rangle|1 \rangle ... |1 \rangle | \bm \psi^{N_t-1} \rangle)
\end{aligned}
\end{equation}

\begin{figure}
\centering
\centerline{
\Qcircuit @C=1em @R=0.5em @!R { \\
    	 	\nghost{ } & \lstick{| 0\rangle^{^{n_p}}}& \qw  & \lstick{/} \qw  & \gate{\bm U_R^0} \barrier[0em]{5}  & \qw & \gate{\bm F_1 (\bm \theta)} \barrier[0em]{5} & \qw & \gate{\bm F_2 (\bm \theta)} \barrier[0em]{5} & \qw & \gate{\bm F_3 (\bm \theta)} \barrier[0em]{5} & \qw  & \qw &\nghost{}&\lstick{\dots \ } \nghost{}& \gate{\bm F_{n_t} (\bm \theta)} \barrier[0em]{5} & \qw  & \qw  \\
	 	\nghost{ } & \lstick{| 0\rangle } & \qw &  \qw & \gate{H} & \qw & \ctrl{-1} & \qw & \qw & \qw & \qw & \qw & \qw &\nghost{}&\lstick{\dots \ } \nghost{} & \qw&  \qw  & \qw  \\
	 	\nghost{ } & \lstick{| 0\rangle } & \qw & \qw & \gate{H} & \qw & \qw & \qw & \ctrl{-2} & \qw & \qw & \qw & \qw &\nghost{}&\lstick{\dots \ } \nghost{} & \qw & \qw   & \qw\\
	 	\nghost{ } & \lstick{| 0\rangle } & \qw & \qw & \gate{H}& \qw & \qw & \qw & \qw & \qw & \ctrl{-3} & \qw & \qw &\nghost{}&\lstick{\dots \ } \nghost{}& \qw & \qw & \qw  \\
          \nghost{ } & \lstick{\vdots\ \  } & \nghost{} & \rstick{\ \ \ \ \ \vdots} \nghost{}& \nghost{} & \nghost{} & \nghost{}& \nghost{} & \nghost{}& \nghost{}& \nghost{}& \nghost{}&\rstick{\ \ \ \ddots} \nghost{}   \\
        \nghost{ }& \lstick{| 0\rangle } & \qw  & \qw  & \gate{H} &\dstick{\ \ \ \ \ \ \ \ \ket{\bm\psi_R^0}} \qw & \qw &\dstick{\ \ \ \ \ \ \ \ \ket{\bm\psi_R^1}} \qw &  \qw &\dstick{\ \ \ \ \ \ \ \ \ket{\bm\psi_R^2}} \qw & \qw &\dstick{\ \ \ \ \ \ \ \ \ket{\bm\psi_R^3}} \qw & \qw &\nghost{}&\lstick{\dots \ } \nghost{}& \ctrl{-5} &  \qw & \dstick{\ \ \ \ \ \ \ \ \ket{\bm\psi_R^{n_t}}}\qw  \\ 
 \\  
 \\}
}
\caption{The quantum circuit isolates temporal evolution from spatial interactions by treating the spatial interactions, represented by the circuit $\bm F_k (\bm \theta)$, as a submodule of temporal evolution. The ancillary qubits become superpositions of $\ket{0}$ and $\ket{1}$ states after the Hadamard gate application. Serving as control qubits, they regulate the action of $\bm F_k (\bm \theta)$ on only half of the quantum state components.}
\label{temporal evolution}
\end{figure}
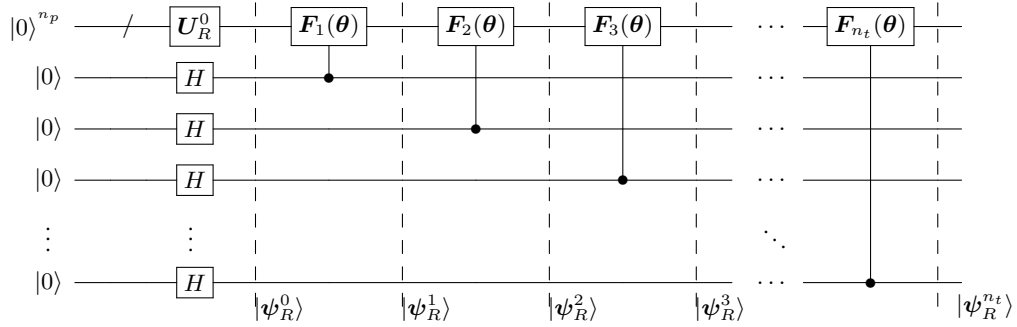

\section*{Supplementary Note 3: Implementation of Evolution Block}
\label{EvolutionBlockImplementation}
The nonlinear nature of \eqref{eq:psij1} is unsuitable for direct solution using quantum computing techniques. To circumvent this challenge, we consider an approximate linearised system. Specifically, equation \eqref{eq:psij1} is approximated by the following form
\begin{equation}
    \frac{\operatorname{d}}{\operatorname{d} t} 
   \begin{bmatrix}
\psi_1(t)\\
\vdots\\
\psi_{N_p}(t)\\
\end{bmatrix}
=-\operatorname{i} \bm{H}_{\text{eff}}(\bm{\theta})  \begin{bmatrix}
\psi_1(t)\\
\vdots\\
\psi_{N_p}(t)\\
\end{bmatrix} + \bm \varepsilon(t).
\end{equation}
Here, $\bm{H}_{\text{eff}}(\bm{\theta}) $ represents the effective Hamiltonian. Within a certain range, the original system can be approximated by a linear system. In our experiments, we determine the effective Hamiltonian for this linear approximated system by fitting a Hermitian matrix to the numerical solutions of the system's evolution over a preceding period of time.

The effective Hamiltonian $\bm{H}_{\text{eff}}(\bm{\theta}) $ is obtained by minimizing the following loss function:
\begin{equation}
\mathcal{L}(\bm{H}_{\text{eff}}(\bm{\theta}) ) = \sum_{i=0}^{M_t-1} \| |\bm \psi^{i+1}\rangle - e^{-\operatorname{i}\bm{H}_{\text{eff}}(\bm{\theta}) \Delta t }|\bm  \psi^{i}\rangle \|^2,
\end{equation}
where \(M_t-1\) denotes the needed dateset size. This loss function quantifies the deviation between the numerically evolved states and the states predicted by the effective Hamiltonian.

Considering that a Hermitian matrix is a complex symmetric matrix, its degrees of freedom are $N_p^2$ (the main diagonal contains $N_p$ real numbers, while the off-diagonal complex symmetric elements contribute $N_p \times (N_p - 1)$ degrees of freedom). Therefore, a data size of at least $N_p^2$ is required to fully determine such a matrix. Once the effective Hamiltonian $\bm{H}_{\text{eff}}(\bm{\theta}) $ is obtained, the evolution matrix $\bm{F}_k (\bm \theta)$ can be determined using:
\begin{equation}
    \bm{F}_k (\bm \theta) = e^{-\operatorname{i}\bm{H}_{\text{eff}}(\bm{\theta}) (2^k - 1)\Delta t},
\end{equation}
where $\Delta t$ represents the unit time step used in the computation process. Based on the evolution matrix, the corresponding quantum gate circuit for $\bm{F}_k (\bm \theta)$ can be constructed.

This method, which involves explicitly constructing the effective Hamiltonian, forms the basis of our experiments. However, as the system size increases, the effective Hamiltonian becomes impractical to scale due to its rapidly growing degrees of freedom. To address this limitation, we propose an alternative approach using a parameterized quantum circuit (Ansatz). In this method, the circuit parameters are optimized via a variational quantum algorithm to fit the target data, effectively bypassing the need to explicitly construct the effective Hamiltonian. 
In the following sections, we will introduce the design of the cost function and the corresponding quantum circuit, and then discuss the gradient computation techniques required for parameter optimization. These two aspects form the foundation of the proposed method, enabling it to capture the essential dynamics of the system while remaining scalable to larger system sizes.

\subsection*{A. Cost function and corresponding quantum circuit}
\label{CostFunction&QuantumCircuit}
In the optimization of ansatz parameters, the use of an appropriate cost function is crucial to gauge the proximity between predicted and ground truth values. To this end, we define the cost function using a ground truth dataset, where each data point consists of two frames separated by \( K = (2^k - 1) \) time steps. Here, \( k \) ranges from \( 1 \) to \( n_t \), and the frames are denoted as \( \bm \psi^i_{\operatorname{gt}} \) and \( \bm \psi^{i+K}_{\operatorname{gt}} \), corresponding to a total time interval of \( K\Delta t \). The generation of frames $\bm \psi^i_{\operatorname{gt}}$ and $\bm \psi^{i+K}_{\operatorname{gt}}$ involves the application of quantum circuits $\bm U_{R}^i$ and $\bm U_{R}^{i+K}$, which act on the ground state to yield the quantum states corresponding to the respective time frames:
\begin{equation}
    |\bm\psi^i_{\operatorname{gt}} \rangle = \bm U_{R}^i |\bm 0 \rangle,~~|\bm\psi^{i+K}_{\operatorname{gt}} \rangle = \bm U_{R}^{i+K} |\bm 0 \rangle.
    \label{eq:UR}
\end{equation}
The cost function is then defined as the inner product of the difference between the predicted and ground truth quantum states, resulting in a real number:
\begin{equation}
\label{eq:cost function}
C(\bm \theta)=\langle \bm \psi^{i+K}_{\operatorname{gt}} - \bm \psi^{i+K}_{\operatorname{predict}} | \bm \psi^{i+K}_{\operatorname{gt}} - \bm \psi^{i+K}_{\operatorname{predict}}\rangle .
\end{equation}
Here, $|\bm \psi^{i+K}_{\operatorname{predict}}  \rangle$ signifies the estimated result mapped by $\bm F_k(\bm \theta)$, representing the evolution over \( K \) time steps, as expressed in \eqref{eq:mapping}, namely,
\begin{equation}
|\bm \psi^{i+K}_{\operatorname{predict}} \rangle = \bm F_{k}(\bm \theta)|\bm \psi^{i}_{\operatorname{gt}} \rangle.
\end{equation}

\begin{figure}
\centering
\centerline{
    \Qcircuit @C=0.8em @R=1em @!R { 
	  \lstick{| 0\rangle^{{n_p}} :  } & \qw & \lstick{/} \qw  &  \barrier[0em]{2} \qw & \qw& \gate{\bm U_{R}^i} & \gate{\bm U_{R}^{i+1}}  &  \barrier[0em]{2} \qw &  \qw & \gate{\bm F(\bm \theta)} \barrier[0em]{2}& \qw & \qw \barrier[0em]{2} & \qw & \qw & \qw & \qw\\
	 \lstick{| 0\rangle :  } &\qw & \qw & \gate{H} & \qw& \ctrl{-1} & \ctrlo{-1} & \qw &  \qw  & \ctrl{-1} & \qw  & \gate{H}& \qw & \meter  & \qw& \qw \\
	 \lstick{\mathrm{{c} : }} & \cw & \cw & \cw  &\dstick{\ \ \ \ \ \ \ket{\bm\varphi_1}} \cw  & \cw& \cw & \cw&\dstick{\ \ \ \ \ \ \ket{\bm\varphi_2}} \cw&  \cw &\dstick{\ \ \ \ \ \ \ket{\bm\varphi_3}} \cw &   \cw &\dstick{\ \ \ \ \ \ \ket{\bm\varphi_4}} \cw & \cw \ar @{<=} [-1,0]& \dstick{_{_{\hspace{0.0em}}}} \cw   & \cw  \gategroup{1}{4}{2}{7}{.7em}{--}\\
  \\
   }}
\caption{Circuit for cost evaluation}
\label{fig:cost evaluation}
\end{figure}
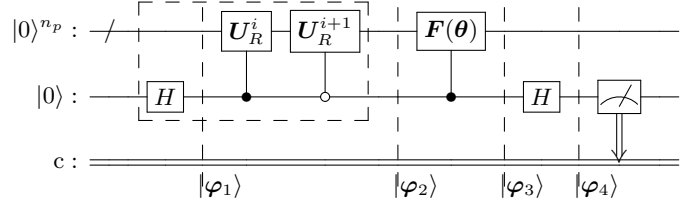

To implement \eqref{eq:cost function} in the quantum circuit, we follow a sequence of four steps, illustrated in Fig. \ref{fig:cost evaluation}. Firstly, we prepare the quantum state $\bm \varphi_1$ as:
\begin{equation}
    |\bm \varphi_1 \rangle = H|0\rangle \otimes |0\rangle^{\otimes n_p} = H  | 0\rangle | \bm 0\rangle  = \frac{1}{\sqrt{2}}(|0\rangle|\bm 0\rangle + |1\rangle|\bm 0\rangle),
\end{equation}
where $|0\rangle^{\otimes n_p}$ is abbreviated as $|\bm 0\rangle$. Successive application of the controlled $\bm U_{R}^{i+K}$ gate and $\bm U_{R}^i$ gate results in the quantum state $\bm \varphi_2$:
\begin{equation}
|\bm \varphi_2 \rangle = \frac{1}{\sqrt{2}}(|0\rangle \bm U_{R}^{i+K}|\bm 0\rangle + |1\rangle \bm U_{R}^i|\bm 0\rangle) = \frac{1}{\sqrt{2}}(|0\rangle|\bm \psi^{i+K}_{\operatorname{gt}} \rangle + |1\rangle| \bm \psi^i_{\operatorname{gt}} \rangle).
\end{equation}
Next, the system undergoes a controlled $\bm F_k(\bm \theta)$ gate, affecting only the components in the state $| \bm\psi^i_{\operatorname{gt}} \rangle|1\rangle$. The resulting quantum state $\bm \varphi_3$ is:
\begin{equation}
|\bm \varphi_3\rangle = \frac{1}{\sqrt{2}}(|0\rangle|\bm \psi^{i+K}_{\operatorname{gt}} \rangle + |1\rangle \bm F_k(\bm \theta)| \bm \psi^i_{\operatorname{gt}} \rangle) = \frac{1}{\sqrt{2}}(|0\rangle|\bm \psi^{i+K}_{\operatorname{gt}} \rangle + |1\rangle|\widetilde{\bm \psi}^{i+K}_{\operatorname{predict}} \rangle).
\end{equation}
Finally, a Hadamard gate is applied to the ancillary qubit, resulting in the quantum state $\bm \varphi_4$:
\begin{equation}
\begin{aligned}
|\bm \varphi_4 \rangle =& \frac{1}{2}((|0\rangle + |1\rangle)|\bm \psi^{i+K}_{\operatorname{gt}} \rangle + (|0\rangle - |1\rangle)|\widetilde{\bm \psi}^{i+K}_{\operatorname{predict}} \rangle)\\
=&\frac{1}{2}(|0\rangle(| \bm \psi^{i+K}_{\operatorname{gt}} \rangle + |\widetilde{\bm \psi}^{i+K}_{\operatorname{predict}} \rangle) + |1\rangle(| \bm \psi^{i+K}_{\operatorname{gt}} \rangle - |\widetilde{\bm \psi}^{i+K}_{\operatorname{predict}} \rangle)).
\end{aligned}
\end{equation}

The probability of observing the ancillary qubit in the state $|1 \rangle$ can equivalently be represented using the expectation value of the Pauli-$Z$ operator acting on the ancillary qubit. The quantum state after applying the Hadamard gate is $|\bm \varphi_4\rangle$, and the expectation value of the Pauli-$Z$ operator on the ancillary qubit is given by:
\begin{equation}
    \langle Z \rangle = \langle \bm \varphi_4 | Z \otimes \mathbb{I} |\bm \varphi_4 \rangle,
\end{equation}
where $Z$ acts on the ancillary qubit, and $\mathbb{I}$ is the identity operator acting on the remaining qubits.

Substituting the quantum state $|\bm \varphi_4\rangle$, we write the expectation value of the Pauli-$Z$ operator is:
\begin{equation}
    \langle Z \rangle = \frac{1}{4} \Big( \langle \bm \psi^{i+K}_{\operatorname{gt}} + \widetilde{\bm \psi}^{i+K}_{\operatorname{predict}} | \bm \psi^{i+K}_{\operatorname{gt}} + \widetilde{\bm \psi}^{i+K}_{\operatorname{predict}} \rangle - \langle \bm \psi^{i+K}_{\operatorname{gt}} - \widetilde{\bm \psi}^{i+K}_{\operatorname{predict}} | \bm \psi^{i+K}_{\operatorname{gt}} - \widetilde{\bm \psi}^{i+K}_{\operatorname{predict}} \rangle \Big).
\end{equation}

Expanding the terms and simplifying, we find:
\begin{equation}
    \langle Z \rangle = \frac{1}{2} \Big( \langle \bm \psi^{i+K}_{\operatorname{gt}} |\widetilde{\bm \psi}^{i+K}_{\operatorname{predict}}  \rangle + \langle \widetilde{\bm \psi}^{i+K}_{\operatorname{predict}} | \bm \psi^{i+K}_{\operatorname{gt}}  \rangle \Big).
\end{equation}

The cost function $C(\bm \theta)$ is defined as:
\begin{equation}
    C(\bm \theta) = \langle \bm \psi^{i+K}_{\operatorname{gt}} - \widetilde{\bm \psi}^{i+K}_{\operatorname{predict}} | \bm \psi^{i+K}_{\operatorname{gt}} - \widetilde{\bm \psi}^{i+K}_{\operatorname{predict}} \rangle,
\end{equation}
which expands to:
\begin{equation}
    C(\bm \theta) = 2 \Big( \langle \bm \psi^{i+K}_{\operatorname{gt}} | \bm \psi^{i+K}_{\operatorname{gt}} \rangle + \langle \widetilde{\bm \psi}^{i+K}_{\operatorname{predict}} | \widetilde{\bm \psi}^{i+K}_{\operatorname{predict}} \rangle \Big) - 4 \langle Z \rangle.
\end{equation}

Rearranging terms, we have:
\begin{equation}
    C(\bm \theta) = 4 (1 - \langle Z \rangle).
\end{equation}

Thus, the cost function $C(\bm \theta)$ is directly related to the expectation value of the Pauli-$Z$ operator on the ancillary qubit.

\subsection*{B. Gradient calculations}
\label{GradientCalculations}
By employing equations \eqref{eq:UR} and \eqref{eq:mapping}, \eqref{eq:cost function} can be reformulated as:
\begin{equation}
    C(\bm \theta) = \langle \bm 0 |((U_R^{i+K})^\dagger- (U_R^{i})^\dagger[\bm F_k(\bm \theta)]^\dagger )(U_R^{i+K}-\bm F_k(\bm \theta)U_R^{i})|\bm 0 \rangle,
\end{equation}
where the symbol $\dagger$ denotes the conjugate transpose operation applied to a complex matrix. Subsequently, for parameters $\bm \theta$, the partial derivative can be expressed as:
\begin{equation}
\begin{aligned}
\frac{\partial C(\bm{\theta})}{\partial \theta_i} = & - \langle \bm{0} | (U_R^i)^\dagger \frac{\partial \bm{F}_k(\bm{\theta})^\dagger}{\partial \theta_i} (U_R^{i+K} - \bm{F}_k(\bm{\theta}) U_R^i) | \bm{0} \rangle \\
& - \langle \bm{0} | \big((U_R^{i+K})^\dagger - (U_R^i)^\dagger \bm{F}_k(\bm{\theta})^\dagger\big) \frac{\partial \bm{F}_k(\bm{\theta})}{\partial \theta_i} U_R^i | \bm{0} \rangle.
\label{eq:dCtheta}
\end{aligned}
\end{equation}

Gradient calculations in quantum computing are challenging due to the inability to directly measure quantum states. The parameter-shift rule proposed by Mitarai et al. (2018) \cite{mitarai2018quantum} offers an effective alternative, involving incremental parameter adjustments within a quantum circuit to estimate gradients. By calculating the differences in circuit outputs corresponding to small changes in each parameter, gradients can be estimated without the need for direct quantum state measurement, thus enabling gradient computations in quantum computing.

We employ the parameter-shift rule to compute the gradient $\partial \bm F_k (\bm \theta)/\partial \bm \theta$.  Through a series of deductions, \eqref{eq:dCtheta} can be expressed as:
\begin{equation}
\begin{aligned}
    \frac{\partial C(\bm{\theta})}{\partial \theta_i} &= \frac{1}{2}(\langle \bm 0 |({U_R^{j+K}}^\dagger- {\bm F_{i,+}(\bm \theta)}^\dagger{U_R^{j}}^\dagger)(U_R^{j+K}-U_R^{j}\bm F_{i,+}(\bm \theta))|\bm 0 \rangle 
    \\&- \langle \bm 0 |({U_R^{j+K}}^\dagger- {\bm F_{i,-}(\bm \theta)}^\dagger{U_R^{j}}^\dagger)(U_R^{j+K}-U_R^{j}\bm F_{i,-}(\bm \theta))|\bm 0 \rangle),
\end{aligned}  
\end{equation}
where we have defined:
\begin{equation}
    \bm F_{i,\pm}(\bm \theta) =  \bm F_{i,\pm}\left(\theta_1,\cdots,\theta_{i-1},\theta_{i}\pm \frac{\pi}{2},\theta_{i+1},\cdots,\theta_L\right).
\end{equation}

In this formulation, the ancillary qubit serves as a control mechanism for the gates that depend on $\bm \theta$. This enables efficient computation of the gradients for each parameter, as the parameter-shift rule evaluates the difference between circuit outputs for \(\theta_i \pm \frac{\pi}{2}\).

Finally, substituting \(\bm F_k(\bm \theta)\) into the gradient formula, the parameter-shift rule provides:
\begin{equation}
    \frac{\partial C(\bm{\theta})}{\partial \theta_k} = \frac{1}{2}\big[C(\bm \theta_{+k}) - C(\bm \theta_{-k})\big],
\end{equation}
where \(\bm \theta_{+k}\) and \(\bm \theta_{-k}\) denote the parameter sets with \(\theta_k\) shifted by \(\pm \frac{\pi}{2}\), respectively.

This approach avoids direct quantum state measurements by leveraging the ancillary qubit for controlled operations and enables efficient gradient-based optimization of the ansatz parameters within the quantum circuit.

\section*{Supplementary Note 4: 
Device information}

\begin{figure}
\centering
\includegraphics[width=17cm]{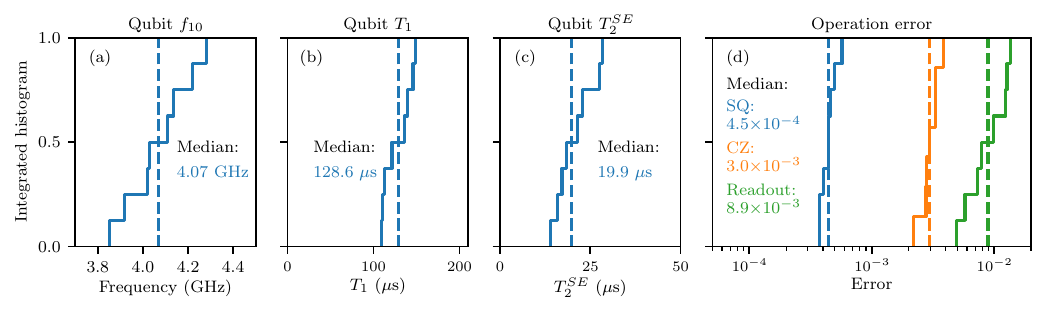} 
\caption{Integrated histograms of device performance parameters in the experiment. Dashed lines indicate the median values. (a) Qubit idle frequency $f_{10}$. (b) Qubit relaxation time $T_1$ measured at the idle frequency. (c) Qubit dephasing time $T^{SE}_2$ measured at the idle frequency with spin echo sequence. (d) Simultaneous operation errors of single-qubit gates (blue), two-qubit CZ gates (orange), and readout (green). The errors of quantum gates are Pauli errors obtained through simultaneous cross-entropy benchmarking, and the readout error is calculated as $e_r=1-(f_0+f_1)/2$ with $f_{0(1)}$ denoting the measure fidelities of the state $|0\rangle$ ($|1\rangle$).}
\label{fig:device}
\end{figure}

Our experiments are performed on a quantum processor featuring frequency-tunable transmon qubits and couplers, with the overall design detailed in Ref.~\cite{SM_Xu2023_digital}. 
The qubits are arranged in a $11\times 11$ square lattice, with adjacent qubits connected by tunable couplers. 
The effective coupling strength between two adjacent qubits can be tuned by biasing the coupler's frequency. 
The maximum resonance frequencies of qubits and couplers are around 4.5 GHz and 9.0 GHz, respectively.
Each qubit is capacitively couple to its own readout resonator for dispersively readout, with the frequency of the readout resonator being around 6.5 GHz.
We employ the gate set $\{U(\theta,\phi,\lambda), CZ\}$ to implement the experimental circuits, where $U(\theta,\phi,\lambda) = Rz(\theta)Ry(\phi)Rz(\lambda) $ denotes a generic single-qubit gate with three Euler angles, and CZ denotes the two-qubit controlled-Z gate that fits the layout topology of our device. Eight qubits on the processor are used in our experiments, with the characterized properties summarized in Fig.~\ref{fig:device}.

\newpage
\bibliographystyle{modified-apsrev4-2.bst}
%

\end{document}